\newcommand{\defdsproblem}[2]{
\vspace{1mm}
\noindent\fbox{
\begin{minipage}{0.96\columnwidth}
  \textsc{#1} \\
#2

\end{minipage}
}
}
  \renewcommand{\epsilon}{\varepsilon}
  \newcommand{\barS}{\widetilde{S}}
  \newcommand{\barG}[1]{\widetilde{G}_{#1}}
  \newcommand{\MakeReverseFactorFree}{\textsl{\textsc{Make-Reverse-Factor-Free}}}
  \newcommand{\MakeFactorFree}{\textsl{\textsc{Make-Factor-Free}}}
  \DeclareMathOperator{\Pref}{\mathit{Pref}}
  \DeclareMathOperator{\PrefSet}{\mathit{PrefSet}}
  \DeclareMathOperator{\SufSet}{\mathit{SufSet}}
  \newcommand{\ov}{\mathit{ov}}
  \DeclareMathOperator{\pr}{\mathit{pr}}
  \newcommand{\OV}{\mathit{OV}}
  \DeclareMathOperator{\opt}{\mathrm{opt}}
  \newcommand{\SCS}{\mathit{SCS}}
  \newcommand{\SCSR}{\mathit{SCS\mbox{-}R}}
  \newcommand{\pgreedy}{\mathit{pgreedy}}
  \newcommand{\str}{\mathit{str}}
  \DeclareMathOperator{\Oh}{\mathcal{O}}
\theoremstyle{plain}
\newtheorem{theorem}{Theorem}%[section]
\newtheorem{proposition}[theorem]{Proposition}
\newtheorem{lemma}[theorem]{Lemma}
\theoremstyle{definition}
\newtheorem{example}{Example}
\theoremstyle{remark}
\newtheorem{claim}{Claim}
\newtheorem{observation}{Observation}  
\begin{document}

\title{On the Greedy Algorithm for the Shortest Common Superstring Problem with Reversals}

\author{
Gabriele Fici

\textit{Dipartimento di Matematica e Informatica, Universit\`a di Palermo, Italy}

\texttt{gabriele.fici@unipa.it}

\and Tomasz Kociumaka

\textit{Faculty of Mathematics, Informatics and Mechanics, University of Warsaw,  Poland}

\texttt{kociumaka@mimuw.edu.pl}

\and Jakub Radoszewski

\textit{Faculty of Mathematics, Informatics and Mechanics, University of Warsaw,  Poland and}

\textit{Newton International Fellow at Department~of Informatics, King's College London, UK}

\texttt{jrad@mimuw.edu.pl}

\and Wojciech Rytter

\textit{Faculty of Mathematics, Informatics and Mechanics, University of Warsaw,  Poland}

\texttt{rytter@mimuw.edu.pl}

\and Tomasz Wale\'{n}

\textit{Faculty of Mathematics, Informatics and Mechanics, University of Warsaw,  Poland}

\texttt{walen@mimuw.edu.pl}
}

\date{}

\sloppy  

\maketitle

  \begin{abstract}
We study a variation of the classical Shortest Common Superstring (SCS) problem in which a shortest superstring of a finite set of strings $S$ is sought containing as a factor every string of $S$ \emph{or} its reversal. We call this problem Shortest Common Superstring with Reversals (SCS-R). This problem has been introduced by Jiang et al. \cite{DBLP:journals/ipl/JiangLD92}, 
who designed a greedy-like algorithm with length approximation ratio $4$.
In this paper, we show that a natural adaptation of the classical greedy algorithm for SCS has (optimal) \textit{compression ratio} $\frac12$, i.e., the sum of the overlaps in the output string is at least half the sum of the overlaps in an optimal solution. 
We also provide a linear-time implementation of our algorithm.
\end{abstract}

\textbf{Keywords:}  Shortest Common Superstring, reversal, greedy algorithm.

%%%%%%%%%%%%%%%%%%%%%%%%%%%%%%%%%%%%%%%%%%%%%%%%%%%%%%%%%%%%%%%%%%%%%%%%%%%%%%%%%%%%%%%%%%%%%  

%%%%%%%%%%%%%%%%%%%%%%%%%%%%%%%%%%%%%%%%%%%%%%%%%%%%%%%%%%%%%%%%%%%%%%%%%%%%%%%%%%%%%%%%%%%%%
\section{Introduction}
%%%%%%%%%%%%%%%%%%%%%%%%%%%%%%%%%%%%%%%%%%%%%%%%%%%%%%%%%%%%%%%%%%%%%%%%%%%%%%%%%%%%%%%%%%%%%

The Shortest Common Superstring (SCS) problem is a classical combinatorial problem on strings with applications in many domains, e.g.~DNA fragment assembly, data compression, etc. (see \cite{GePi14} for a recent survey). It consists, given a finite set of strings $S$ over an alphabet $\Sigma$, in finding a shortest string  containing as factors (substrings) all the strings in $S$. The decision version of the problem is known to be NP-complete \cite{MaSt77,GallantPhD,GaMaSt80}, even under several restrictions on the structure of $S$ (see again \cite{GePi14}). However, a particularly simple greedy algorithm introduced by Gallant in his Ph.D. thesis \cite{GallantPhD} is widely used in applications since it has very good performance in practice (see for instance \cite{Ma09} and references therein). It consists in repetitively replacing a pair of strings with maximum overlap with the string obtained by overlapping the two strings, until one string remains. The greedy algorithm can be implemented using Aho-Corasick automaton in $\Oh(n)$ randomized time (with hashing on the symbols of the alphabet) or $\Oh(n \min(\log m,\log |\Sigma|))$ deterministic time (see \cite{Uk90}), where  $n$ is the sum of the lengths of the strings in $S$ and $m$ its cardinality.

The approximation of the greedy algorithm is usually measured in two different ways: one consists in taking into account the \emph{approximation ratio} (also known as the \emph{length ratio}) $k_g/k_{min}$, where $k_g$ is the length of the output string of greedy and $k_{min}$ the length of a shortest superstring, the other consists in taking into account the \emph{compression ratio} $(n-k_g)/(n-k_{min})$. 

For the approximation ratio, Turner \cite{Tu89} proved that there is no constant $c<2$ such that $k_g/k_{min}\leq c$. The \emph{greedy conjecture} states that  this approximation ratio is in fact $2$ \cite{BlJiLiTrYa94}. The best bound currently known is 3.5 due to Kaplan and Shafrir \cite{KaSh05}. Algorithms with better approximation ratio are known; the best one is due to Mucha, with an approximation ratio of $2\frac{11}{23}$ \cite{Mu13}. 

For the compression ratio, Tarhio and Ukkonen \cite{TaUk88} proved that $(n-k_g)/(n-k_{min})\geq \frac12$ and this bound is tight, since it is achieved for the set $S=\{ab^h,b^ha,b^{h+1}\}$ when greedy makes the first choice merging the first two strings together.

Let us formally state the SCS problem:

\bigskip

  \defdsproblem{Shortest Common Superstring ($\SCS$)}{
    \textbf{Input:} strings $S=\{s_1,\ldots,s_m\}$ of total length $n$.\\
    \textbf{Output:} a shortest string $u$ that contains $s_i$ for each $i=1,\ldots,m$ as a factor.
  }

  \bigskip
  
    Several variations of SCS have been considered in literature. For example, shortest common superstring problem with reverse complements was considered in
  \cite{DBLP:journals/algorithmica/KececiogluM95}. In this setting the alphabet is $\Sigma=\{\mathtt{a,t,g,c}\}$ and the complement of a string $s$ is $\bar{s}^{R}$, where $\, \bar{}\, $ is defined by
  $\bar{\mathtt{a}}=\mathtt{t}$,
  $\bar{\mathtt{t}}=\mathtt{a}$,
  $\bar{\mathtt{g}}=\mathtt{c}$,
  $\bar{\mathtt{c}}=\mathtt{g}$,
  and $t^R$ denotes the reversal of $t$, that is the string obtained reading $t$ backwards.   In particular, this problem was shown to be NP-complete.
  
   Other variations of the SCS problem can be found in \cite{Jiang94,cpm,swap,rest}.   
   
    In this paper, we address the problem of searching for a string $u$ of minimal length such that for every $s_i\in S$,  $u$ contains as a factor $s_i$ \emph{or} its reversal $s_i^{R}$.
  
    \bigskip
    
  \defdsproblem{Shortest Common Superstring with Reversals ($\SCSR$)}{
    \textbf{Input:} strings $S=\{s_1,\ldots,s_m\}$ of total length $n$.\\
    \textbf{Output:} a shortest string $u$ that contains for each $i=1,\ldots,m$ at least one of the
    strings $s_i$ or $s^{R}_i$ as a factor.
  }
  
    \bigskip
For example, if $S=\{aabb, aaac, abbb\}$, then a solution of SCS-R for $S$ is $caaabbb$. Notice that a shortest superstring with reversals can be much shorter than a classical shortest superstring. An extremal example is given by an input set of the form $S=\{ab^h,cb^h\}$.

  The SCS-R problem was already considered by Jiang et al.\ \cite{DBLP:journals/ipl/JiangLD92}, who observed (not giving any proof) that the problem is still NP-hard. We provide a proof at the end of the paper.
  
  In \cite{DBLP:journals/ipl/JiangLD92}, the authors proposed a greedy 4-approximation algorithm. Here, we show that an adaptation of the classical greedy algorithm can be used for solving the SCS-R problem with an (optimal) compression ratio $\frac12$, and that this algorithm can be implemented in linear time with respect to the total size of the input set.

%%%%%%%%%%%%%%%%%%%%%%%%%%%%%%%%%%%%%%%%%%%%%%%%%%%%%%%%%%%%%%%%%%%%%%%%%%%%%%%%%%%%%%%%%%%%%
\section{Basics and Notation}
%%%%%%%%%%%%%%%%%%%%%%%%%%%%%%%%%%%%%%%%%%%%%%%%%%%%%%%%%%%%%%%%%%%%%%%%%%%%%%%%%%%%%%%%%%%%%

Let $\Sigma$ be a finite alphabet. We assume that $\Sigma$ is linearly sortable, e.g., $\Sigma=\{0,\ldots,n^{\Oh(1)}\}$.  The \emph{length} of a string $s$ over $\Sigma$ is denoted by $|s|$. The \emph{empty string}, denoted by $\epsilon$, is the unique string of length zero.
A string $t$ \emph{occurs} in a string $s$ if $s=vtz$ for some strings $v,z$. In this case we say that $t$ is a \emph{factor} of $s$. 
In particular, we say that $t$ is a \emph{prefix} of $s$ when $v=\epsilon$ and a \emph{suffix} of $s$ when $z=\epsilon$.
We say that a factor $t$ is \emph{proper} if $s \ne t$.

The string $s^{R}$ obtained by reading $s$ from right to left is called the \emph{reversal} (or \emph{mirror image}) of $s$.
Given a set of strings $S=\{s_1,\ldots,s_m\}$, we define the set $S^R = \{s_1^{R},\ldots,s_m^{R}\}$ and the set $\barS=S \cup S^R$.

Given two strings $u,v$, we define the (maximum) overlap between $u$ and $v$, denoted by $\ov(u,v)$, as the length of the longest suffix of $u$ that is also a prefix of $v$. Sometimes we abuse the notation and also say that the suffix of $u$ of length $\ov(u,v)$ is the overlap of $u$ and $v$.
In general $\ov(u,v)$ is not equal to $\ov(v,u)$, but it is readily verified that $\ov(u,v) = \ov(v^{R},u^{R})$.
Additionally, we define $\pr(u,v)$ as the prefix of $u$ obtained by removing the suffix of length $\ov(u,v)$ and denote $u \otimes v=\pr(u,v) v$.
Note that the $\otimes$ operation is in general neither symmetric nor associative.

A set of strings $S$ is called \emph{factor-free} if no string in $S$ is a factor of another string in $S$. We say that $S$ is \emph{reverse-factor-free} if there are no distinct strings $u,v\in S$ such that $u$ is a factor of $v$ or~$v^R$.

Given a factor-free set of strings  $S=\{s_1,\ldots,s_m\}$, the SCS problem for $S$ is known to be equivalent to that of finding a maximum-weight Hamiltonian path $\pi$ in the \emph{overlap graph} $G_S$, which is a directed weighted graph $(S,E,w)$ with arcs $E=\{(s_i,s_j)\mid i\neq j\}$ 
of weights $w(s_i,s_j)=\ov(s_i,s_j)$ (cf.\ Theorem 2.3 in~\cite{TaUk88}). 
In this setting, a path $\pi=s_{i_1}, \ldots, s_{i_k}$ corresponds to a string $\str(\pi):=\pr(s_{i_1},s_{i_2})\cdots \pr(s_{i_{k-1}},s_{i_k})s_{i_k}$.
By $\ov(\pi)$ we denote the total weight of arcs in the path $\pi$.

To accommodate reversals we extend the notion of an overlap graph to $\barG{S}=(V,E,w)$.
Here $V=\{v_s\,:\,s \in S\} \cup \{v^R_s\,:\,s \in S\}$ so every $s\in S$ corresponds to exactly two vertices,
$v_s$ and $v^R_s$. We define $\str(v_s)=s$ and $\str(v^R_s)=s^R$.
For a vertex $\alpha\in \barG{S}$ we define $\alpha^R$ as $v^R_s$ if $\alpha=v_s$ for some $s$
or as $v_s$ if $\alpha = v^R_s$ for some $s$. Note that $\str(\alpha^R)=\str(\alpha)^R$.
For every $\alpha,\beta \in V$, $\alpha \ne \beta$, we introduce an arc from $\alpha$ to $\beta$ with weight $\ov(\str(\alpha),\str(\beta))$.
For an arc $e=(\alpha,\beta)$ we define $e^R = (\beta^R,\alpha^R)$. Note that the weight of $e^R$ is the same as the weight of $e$.

For paths $\pi$ in $\barG{S}$ we also use the notions of $\str(\pi)$ and $\ov(\pi)$.
We say that a path $\pi$ in $\barG{S}$ is \emph{semi-Hamiltonian} if $\pi$ contains, for every vertex $\alpha\in \barG{S}$, exactly
one of the two vertices $\alpha$, $\alpha^R$.
Observe that a solution to SCS-R problem for a reverse-factor-free set $S$ corresponds to a maximum-weight semi-Hamiltonian path $\pi$ in the overlap graph $\barG{S}$.

\section{Greedy Algorithm and its Linear-Time Implementation}
We define an auxiliary procedure \MakeReverseFactorFree$(S)$ that
removes from $S$ all strings $u$ which are contained as a factor in $v$ or $v^R$ for some $v \in S$, $v \ne u$.
Note that the resulting set $S'$ is reverse-factor-free and, moreover, a string is a common superstring with reversals for $S'$ if and only if it is a common superstring with reversals for $S$.

\begin{example}
  Let $S=\{ab,aaa,aab,baa\}$. Then \MakeReverseFactorFree$(S)$ produces $S'=\{aaa,aab\}$ or $S'=\{aaa,baa\}$.
\end{example}

The Greedy-R algorithm works as follows:
while $|S|>1$, choose $u,v \in \barS$ (excluding $u=v$ and $u=v^{R}$) 
with largest overlap, insert into $S$ the string $u \otimes v$,
and remove from $S$ all strings among $u,v,u^{R},v^{R}$ that belong to $S$;
see the pseudocode.

\DontPrintSemicolon
\begin{algorithm}
  \KwSty{Algorithm} \textsl{Greedy-R}($S$)\\
  \KwIn{a non-empty set of strings $S$}
  \KwOut{a superstring of $S$ that approximates a solution of SCS-R problem for $S$}
  \Begin{
    $S$ := \MakeReverseFactorFree($S$) \;
    \While{$|S|>1$}{
      $P:=\{ (u,v) : u,v\in \barS, u\notin \{v,v^R\}\}$ \;
      $\{$ $S$ is reverse-factor-free and $|S|>1$, so $|P|\ge 1$ $\}$ \\
      take $(u,v)\in P$ with the maximal value of $\ov(u,v)$ \;
      $S:= S \cup \{u \otimes v$\} \;
      $S:=S \setminus \{u,v,u^{R},v^{R}\}$\; 
    }
    \Return{the only element of $S$}
  }
  \textbf{end}
\end{algorithm}

Let us state two properties of this algorithm useful for its efficient implementation.

\begin{lemma}\label{lem:free}
The set $S$ stays reverse-factor-free after each iteration of the \textbf{while} loop.
\end{lemma}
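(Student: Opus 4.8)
The plan is to show that a single iteration of the \textbf{while} loop cannot create a new containment relation. I would argue by contradiction: assume that before an iteration the set $S$ is reverse-factor-free, but after replacing the pair $u, v \in \barS$ with $w := u \otimes v$ the resulting set $S'$ is not. Since every string of $S'$ other than $w$ was already in $S$, and since $S$ was reverse-factor-free, the only possible violation must involve $w$. So either (i) some $s \in S' \setminus \{w\}$ is a factor of $w$ or $w^R$, or (ii) $w$ is a factor of some $s \in S' \setminus \{w\}$ or its reversal.

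First I would handle case (ii). Since $|w| = |u| + |v| - \ov(u,v) \ge \max(|u|,|v|)$ (strictly greater unless one of $u,v$ is empty, which does not happen here), and since $w$ contains $u$ as a prefix and $v$ as a suffix, if $w$ (or $w^R$) were a factor of some $s\in S$, then $u$ (or $u^R$) would be a factor of $s$ as well. But $u \in \barS$, so $u$ or $u^R$ lies in $S$; if that string is not $s$ itself this contradicts reverse-factor-freeness of $S$, and it cannot be $s$ since $|w| > |u| = |u^R|$ rules out $s \in \{u, u^R\}$ (and even the strings $u,v,u^R,v^R$ have all been removed from $S'$). Hence case (ii) is impossible.

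The main work is case (i): some $s \in S$ (which is distinct from $u,v,u^R,v^R$, since those were removed) is a factor of $w = \pr(u,v)\,v$ or of $w^R = v^R\,\pr(u,v)^R$. The key geometric observation is that an occurrence of $s$ in $w$ must "stick out" of both $u$ and $v$: if the occurrence were entirely inside the prefix $u$ of $w$, then $s$ would be a factor of $u$, contradicting reverse-factor-freeness (as $u$ or $u^R$ is in $S$, distinct from $s$); symmetrically it cannot lie entirely inside the suffix $v$. Therefore the occurrence of $s$ straddles the boundary, so $s$ has a prefix that is a proper suffix of $u$ and a suffix that is a proper prefix of $v$, and these two parts overlap within $s$ by exactly $|s| - \pr(u,v)$ positions — equivalently, $s$ witnesses a suffix of $u$ matching a prefix of $v$ of length strictly greater than $\ov(u,v)$, because the matched region has length $\ge \min(|s|, \text{something}) > \ov(u,v)$. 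This contradicts the maximality of $\ov(u,v)$ among all pairs in $P$ — more precisely, it would mean $\ov(u,v)$ was not the true maximum overlap, or that a larger overlap was available. The symmetric analysis for an occurrence in $w^R$ uses $\ov(u,v) = \ov(v^R, u^R)$.

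The step I expect to be the main obstacle is making the counting in case (i) fully rigorous: one must carefully verify that when $s$ straddles the $\pr(u,v)\mid v$ boundary of $w$, the induced common suffix-of-$u$/prefix-of-$v$ match genuinely has length exceeding $\ov(u,v)$ (handling the corner case where $s$ extends past the end of $v$ or before the start of $u$, in which case $s$ would contain $v$ or $u$ as a factor and one recovers a containment within $S$ after all), and then to phrase this contradiction against the greedy choice correctly — namely that the pair $(u,v)$ or some pair derived from $s$ would have offered a strictly larger overlap than the one greedy selected, contradicting that $(u,v)$ was chosen with maximal $\ov$. Once the boundary-straddling dichotomy is set up, the rest is a short computation with lengths.
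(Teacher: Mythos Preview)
Your approach is the same as the paper's: reduce to the case of a string $s\in\barS\setminus\{u,u^R,v,v^R\}$ that is a factor of $w=u\otimes v$ but not of $u$ or $v$, and then derive a contradiction with the greedy choice. The paper does this in one line, simply asserting that such an $s$ satisfies $\ov(u,s)>\ov(u,v)$ and $\ov(s,v)>\ov(u,v)$, hence $(u,s)\in P$ (or $(s,v)\in P$) should have been picked instead.

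Your exposition wobbles at the key step, though. The sentence ``$s$ witnesses a suffix of $u$ matching a prefix of $v$ of length strictly greater than $\ov(u,v)$'' is not what the straddling occurrence gives you, and indeed that claim is impossible by the definition of $\ov(u,v)$ as the \emph{maximum} such length. What you actually get from an occurrence of $s$ starting at position $i\le |\pr(u,v)|$ and ending past position $|u|$ is that the suffix of $u$ of length $|u|-i+1\ge \ov(u,v)+1$ is a prefix of $s$, i.e.\ $\ov(u,s)>\ov(u,v)$; symmetrically $\ov(s,v)>\ov(u,v)$. The contradiction is then with the greedy choice of $(u,v)$ as a pair in $P$ of maximum overlap, not with the definition of $\ov(u,v)$ itself. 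You do gesture at the correct version in your final paragraph (``some pair derived from $s$ would have offered a strictly larger overlap''), so the fix is just to replace the muddled sentence in the middle with this cleaner claim.
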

\begin{proof}
Suppose that at some point $S$ ceases to be reverse-factor-free.
This might only be due to the fact that, when $w=u\otimes v$ is introduced to $S$, $\barS$ contains a string
$w' \notin \{u,u^R,v,v^R\}$ such that $w'$ is a factor of $w$ but not of $u$ or $v$.
The latter, however, implies $\ov(u,w')>\ov(u,v)$ and $\ov(w',v)>\ov(u,v)$.
That contradicts the choice
of $(u,v)\in P$ maximizing $\ov(u,v)$.
\end{proof}

\begin{lemma}\label{lem:stays}
Before $u\otimes v$ is inserted to $S$, we have $\ov(w, u\otimes v)=\ov(w,u)$ and $\ov(u\otimes v,w)=\ov(v,w)$
for every $w\in \barS$.
\end{lemma}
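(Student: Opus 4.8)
The plan is to prove both equalities by a direct combinatorial argument about overlaps, using only the definition $u \otimes v = \pr(u,v)\, v$ and the maximality of $\ov(u,v)$ among all pairs in $P$ at the current iteration. I will treat the two claims separately since they are symmetric under reversal (recall $\ov(x,y)=\ov(y^R,x^R)$ and $(u\otimes v)^R = v^R \otimes u^R$), so once the first is established the second follows by applying it to the reversed set; but it is cleaner to just argue each side.

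First I would prove $\ov(u\otimes v, w) = \ov(v,w)$. Write $u \otimes v = \pr(u,v)\,v$. Since $v$ is a suffix of $u\otimes v$, every prefix of $w$ that is a suffix of $v$ is also a suffix of $u \otimes v$, giving $\ov(u\otimes v, w) \ge \ov(v,w)$. For the reverse inequality, suppose $\ell := \ov(u\otimes v, w) > \ov(v,w) =: k$. Then the suffix of $u\otimes v$ of length $\ell$ is a prefix of $w$; since $\ell > k \ge 0$ and $|v|$ could be smaller than $\ell$, this suffix of length $\ell$ straddles the boundary between $\pr(u,v)$ and $v$ (or lies entirely in the $v$-part, but then it would be a suffix of $v$ of length $\ell > k$, a prefix of $w$, contradicting maximality of $k$). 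So $\ell > |v|$, meaning the length-$\ell$ suffix of $u\otimes v$ has $v$ as a proper suffix; in particular $v$ is a proper factor of this prefix of $w$, hence a factor of $w$. But $w, v \in \barS$ and, by Lemma~\ref{lem:free}, $S$ is reverse-factor-free at this point, and $\barS$ is closed under reversal; if $v$ is a proper factor of $w$ with $w \notin \{v, v^R\}$ this contradicts reverse-factor-freeness, while if $w \in \{v, v^R\}$ one checks directly (using $\ell > |v| = |w|$) that no such overlap of length $\ell$ can occur. Hence $\ell \le k$, establishing $\ov(u\otimes v, w) = \ov(v,w)$.

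The symmetric claim $\ov(w, u\otimes v) = \ov(w,u)$ follows by applying the already-proved identity to the overlap graph $\barG{S}$ "read backwards": formally, $\ov(w, u\otimes v) = \ov((u\otimes v)^R, w^R) = \ov(v^R \otimes u^R, w^R)$, and since $(u,v)\in P$ maximizes $\ov(u,v) = \ov(v^R, u^R)$, the pair $(v^R, u^R)$ likewise maximizes its overlap among reversed pairs, so the first part gives $\ov(v^R\otimes u^R, w^R) = \ov(u^R, w^R) = \ov(w,u)$. Alternatively one repeats the direct argument: $\pr(u,v)$ is a prefix of $u\otimes v$ and of $u$, so overlaps of $w$ into the prefix coincide unless they reach past $|\pr(u,v)|$ into the appended copy of $v$, which would force a long suffix of $w$ to contain $\pr(u,v)$ — but a suffix of $w$ that is a prefix of $u\otimes v$ longer than $|\pr(u,v)|$ has $\pr(u,v)\,($a prefix of $v)$ as a prefix, so its length exceeds $\ov(w,u)$ only if this prefix of $v$ is nonempty, which again yields an overlap $\ov(w,v) > \ov(u,v)$ or a factor relation contradicting reverse-factor-freeness.

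The main obstacle is the careful case analysis at the boundary: one must rule out that a genuinely longer overlap is created "across the seam" of the concatenation $\pr(u,v)\,v$, and the clean way to do this is to observe that any such longer overlap would witness either (i) an overlap strictly exceeding $\ov(u,v)$ between two members of $\barS$ — impossible by the greedy choice of $(u,v)$ as maximizing the overlap over all of $P$ — or (ii) a proper-factor relation inside $\barS$, impossible by Lemma~\ref{lem:free}. I would make sure to handle the degenerate subcase $w \in \{u, u^R, v, v^R\}$ explicitly (these are exactly the strings about to be removed), where the bound still holds because the hypothetical long overlap would have to exceed the length of one of the strings involved. Apart from that bookkeeping, the argument is short.
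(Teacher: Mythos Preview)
Your argument is correct and follows essentially the same route as the paper: the easy inequalities $\ov(w,u\otimes v)\ge \ov(w,u)$ and $\ov(u\otimes v,w)\ge \ov(v,w)$ hold because $u$ is a prefix and $v$ a suffix of $u\otimes v$, and strictness of either would force the overlap to exceed $|u|$ (resp.\ $|v|$), making $u$ (resp.\ $v$) a proper factor of $w$, which contradicts Lemma~\ref{lem:free}. One remark: the greedy maximality of $\ov(u,v)$ that you mention in your plan and in the final summary is not actually needed here---your own detailed argument for the first equality uses only Lemma~\ref{lem:free}, and the paper's proof likewise relies solely on reverse-factor-freeness (the maximality is used only indirectly, inside the proof of Lemma~\ref{lem:free}).
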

\begin{proof}
Clearly, $\ov(w, u\otimes v)\ge \ov(w,u)$ and $\ov(u\otimes v,w)\ge \ov(v,w)$. 
Moreover, one of these inequalities might be strict only if $\ov(w, u\otimes v)>|u|$ or $\ov(u\otimes v, w)>|v|$, in particular, only if $w$ contains respectively $u$ or $v$ as a proper factor.
This, however, contradicts Lemma~\ref{lem:free}.
\end{proof}

\subsection{Interpretation on the Overlap Graph}

Tarhio and Ukkonen~\cite{TaUk88} work with the following interpretation of the greedy algorithm on the overlap graph
$G_S$: They maintain a set of arcs $F\subseteq E(G_S)$ forming a collection of disjoint paths and
at each step they add to $F$ an arc $e\in E(G_S)\setminus F$ of maximum weight so that the resulting set still forms a collection of disjoint paths.
Here, paths correspond to strings in the collection $S$ maintained by the original implementation. 
Insertion of an arc $(s_i,s_j)$ to $F$ results in merging two paths into one and this corresponds to replacing strings $u,v\in S$ with $u\otimes v$.
It turns out that $\ov(u,v)=\ov(s_i,s_j)$ and thus 
both implementations of the greedy algorithm are equivalent.

\DontPrintSemicolon
\begin{algorithm}[h]
  \KwSty{Algorithm} \textsl{Greedy-R2}($S$)\\
  \KwIn{a non-empty set of strings $S$}
  \KwOut{a superstring of $S$ that approximates a solution of SCS-R problem for $S$}
  \Begin{
    $S$ := \MakeReverseFactorFree($S$) \;
    Construct $\barG{S}$\;
    $F := \emptyset$\;
    \For{$i := 1$ \KwSty{to} $|S|-1$}{
      $P:=\{e \in E(\barG{S})\setminus F : F\cup\{e,e^R\}\text{ forms a collection of disjoint paths in }\barG{S}\}$ \;
      take $e\in P$ with the maximal weight $w(e)$ \;
      $F := F\cup\{e,e^R\}$\; 
    }
    \Return{$\str(\pi)$ for one of the maximal paths $\pi$ formed by $F$ in $\barG{S}$}
  }
  \textbf{end}
\end{algorithm}

Here, we provide an analogous interpretation of Greedy-R on the overlap graph $\barG{S}$
and, for completeness, explicitly prove its equivalence to the original implementation.
We also maintain a set of arcs $F\subseteq E(\barG{S})$ forming a 
collection of disjoint paths.  In each step we extend $F$ with a pair of arcs $\{e,e^R\}$ of maximum (common)
weight so that the resulting set still forms a collection of disjoint paths; see the pseudocode of algorithm Greedy-R2.
Observe that this way paths formed by $F$ come in pairs $\pi,\pi^R$ such that $\pi=\alpha_1,\ldots,\alpha_p$
and $\pi^R=\alpha_p^R,\ldots,\alpha_1^R$ (in particular, $\str(\pi^R)=\str(\pi)^R$).
We claim that these pairs of paths correspond to strings $s\in S$ of the original implementation (with $s=\str(\pi)$ or $s=\str(\pi^R)$).
In particular, each string $s\in \barS$ corresponds to a single path formed by $F$ unless
$s=s^R$ when it corresponds to two reverse paths. 
 The claim is certainly true at the beginning of the algorithm, so let us argue
that single iterations of the main loops in both algorithms perform analogous operations.

First, consider an arc $e=(\alpha,\beta)$ such that there exists a path $\pi$ that ends at $\alpha$ and a path $\pi'$
that starts at $\beta$. (Otherwise, $F\cup\{e,e^R\}$ does not form a collection of disjoint paths).
Observe that strings $u=\str(\pi)$ and $v=\str(\pi')$ belong to $\barS$ in Greedy-R and that $u\in \{v,v^R\}$
if and only if $F\cup\{e,e^R\}$ yields a cycle. Thus, both algorithms consider
essentially the same set of possibilities 
(the only caveat is that if $u$ or $v$ is a palindrome, then several arcs $e$ correspond to the same pair of strings from $\barS$). 
By Lemma~\ref{lem:stays} the weight of an arc $(\alpha,\beta)$ is $\ov(\str(\alpha),\str(\beta))=\ov(u,v)$. Thus, 
the maximum-weight arcs correspond to pairs $(u,v)\in P$ with largest overlap.

Clearly, setting $F:= F\cup\{e,e^R\}$ results in merging $\pi'$ with $\pi$
and $\pi^R$ with $\pi'^R$. These paths represent $u\otimes v$ and $v^R \otimes u^R$.
Since the set $S$ in Greedy-R stays reverse-factor-free due to Lemma~\ref{lem:free} and, in particular,
$S$ does not contain any pair of strings $v,v^R$ such that $v$ is not a palindrome,
the ``$S:=S\setminus \{u,u^R,v,v^R\}$'' instruction always removes exactly two elements of $S$.
This means that the bijection between $S$ and pairs of paths formed by $F$ is preserved.

Finally, observe that after exactly $|S|-1$ steps $F$ forms two semi-Hamiltonian paths.
Either of them can be returned as a solution.

\subsection{Linear-Time Implementation}

Ukkonen \cite{Uk90} showed that the Greedy algorithm for the original SCS problem
can be implemented in linear time based on the overlap-graph interpretation.
Our linear-time implementation of the Greedy-R2 algorithm is quite similar.

First, let us show how to efficiently implement the \MakeReverseFactorFree\ operation.
It is actually slightly easier to compute it for $\barS$ instead of $S$.
However, this is not an issue: if we substitute $S$ with $\barS$ in the very beginning of the greedy algorithm,
then the overlap graph will stay the same.

\begin{lemma}\label{lem:mrff}
  The result of \MakeReverseFactorFree$(\barS)$ can be computed in time linear in the total length of strings in $S$.
\end{lemma}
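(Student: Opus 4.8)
The plan is to reduce \MakeReverseFactorFree$(\barS)$ to a collection of standard stringology tasks on the generalized suffix tree (or suffix array) of $\barS$. Concretely, I would first build, in linear time, the generalized suffix tree $T$ of all strings in $\barS$ (this is where the assumption that $\Sigma$ is linearly sortable is used). For each string $w \in \barS$ we need to decide whether $w$ occurs as a factor of some $w' \in \barS$ with $w' \notin \{w\}$ — note that since we work with $\barS$ rather than $S$, the condition ``$u$ is a factor of $v$ or $v^R$'' becomes simply ``$u$ is a factor of some other string in the set'', which is cleaner. A string $w$ is a proper factor of another member of $\barS$ exactly when the locus of $w$ in $T$ (the node or edge-point where the path spelling $w$ ends) has, in its subtree, a leaf corresponding to a suffix that is either (a) a suffix of a string strictly longer than $w$, or (b) a non-full suffix of a string equal in length to $w$ but with a different starting position / identity.

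The main bookkeeping step is therefore: for each leaf of $T$ record which string of $\barS$ it came from and at what offset; then, for each string $w = \str(\alpha)$, jump to its locus (these loci are exactly the leaves/branching points hit when inserting $w$ itself, so they are identified during construction) and test the subtree-emptiness condition above. To make the per-string test $O(1)$ I would precompute, for every explicit node $x$ of $T$, a small summary of its subtree: the maximum length of a string in $\barS$ having a suffix in that subtree, together with a representative index, and a count of how many distinct strings contribute. Comparing this summary against $|w|$ and against the index of $w$ tells us whether $w$ is dominated. All of this is a bottom-up pass over $T$, hence linear. Ties between equal strings (and the palindrome case $w = w^R$, where $\barS$ literally contains two copies) are handled by a fixed tie-breaking rule on indices so that exactly one representative of each group survives, matching the ``or $S' = \{\ldots\}$'' freedom already illustrated in the \MakeReverseFactorFree\ example.

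Finally I would argue correctness: the surviving set is reverse-factor-free by construction, and removing a string that is a factor of (a reversal of) another never changes the set of common superstrings with reversals, as already observed in the text right after the definition of \MakeReverseFactorFree. Since building $T$, annotating leaves, the bottom-up summary pass, and the per-string $O(1)$ queries are each linear in the total length of $\barS$, which is twice the total length of $S$, the whole procedure runs in time linear in $n$.

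The step I expect to be the main obstacle is the careful handling of the degenerate cases: strings that are equal as words (so neither should eliminate the other, but one copy must be dropped), strings that are palindromes (appearing twice in $\barS$ by definition of the multiset), and the distinction between a string being a \emph{proper} factor versus merely occurring as the whole of another equal-length string. Getting the locus-subtree predicate exactly right in all these cases — so that we keep precisely one representative per equivalence class and drop everything that is a proper factor — is the delicate part; the rest is a routine suffix-tree traversal.
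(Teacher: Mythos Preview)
Your approach is correct and would work, but it is heavier than the paper's. The paper proceeds in two clean steps: first, it observes that since $\barS$ is closed under reversal, \MakeReverseFactorFree$(\barS)$ reduces to computing $S' := \MakeFactorFree(\barS)$ (remove every string that is a proper factor of another) and then, for each surviving pair $\{u,u^R\}\subseteq S'$, keeping just one representative, say the lexicographically smaller one. The first step is quoted as a black box from Ukkonen~\cite{Uk90}, who already gave a linear-time \MakeFactorFree\ using the Aho--Corasick automaton; the second step is a one-line scan (report $u$ iff $u\le u^R$). So the paper gets the lemma almost for free.

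What you do instead is reimplement \MakeFactorFree\ from scratch via the generalized suffix tree, with subtree summaries and per-string $O(1)$ locus queries. This is a legitimate alternative to Ukkonen's Aho--Corasick route and yields the same bound, but it is more work, and the ``delicate'' case analysis you flag at the end is precisely what the black-box citation sidesteps. One genuine muddle in your write-up: the reversal-pair deduplication is \emph{not} a matter of ``ties between equal strings'' or of palindromes appearing twice --- $\barS$ is a set, and when $u\ne u^R$ the two are distinct elements of $\barS$, neither a proper factor of the other, so your suffix-tree predicate will keep both. You still need an explicit separate pass to drop one of them, exactly as the paper does with the $u\le u^R$ test; your current plan does not clearly provide this.
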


\begin{proof}
  Let us introduce an auxiliary procedure \MakeFactorFree$(X)$ 
  that removes from $X$ all strings $u$ for which there exists
  a string $v$ in $X$ such that $u$ is a proper factor of $v$.
  
    Ukkonen \cite{Uk90} applied the following result for the preprocessing phase of the greedy algorithm for the ordinary SCS problem.

  \begin{claim}[\cite{Uk90}]
    \MakeFactorFree$(X)$ can be implemented in time linear in the total length of the strings in $X$.
  \end{claim}
  
  Observe that in order to compute \MakeReverseFactorFree$(\barS)$ it suffices to determine
  $S' = \,$\MakeFactorFree$(\barS)$ and then for every pair of strings $u,u^R\in S'$
  leave exactly one of these strings, e.g., the lexicographically smaller one. 
  Note that $u\in S'$ if and only if $u^R \in S'$, so for the latter
  it suffices to iterate through $S'$ and report a string $u$ if and only if $u \le u^R$.
  
  The whole procedure works in linear time in the total length of strings in $\barS$,
  which is at most twice the total length of strings in $S$.
  \end{proof}

Now, we show the main result.

\begin{theorem}
  Greedy-R algorithm can be implemented in time linear in the total length of strings in $S$.
\end{theorem}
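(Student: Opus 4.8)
The plan is to mirror Ukkonen's linear-time implementation of greedy for ordinary SCS, working on the overlap graph $\barG{S}$ as set up in algorithm Greedy-R2. By Lemma~\ref{lem:mrff} the preprocessing step \MakeReverseFactorFree$(\barS)$ runs in linear time and yields a reverse-factor-free set; so from now on I may assume $S$ is reverse-factor-free, the vertices of $\barG{S}$ are exactly the $2|S|$ strings of $\barS$, and no two of them are factor-related. The key structural fact is that the weight of every arc $(\alpha,\beta)$ equals $\ov(\str(\alpha),\str(\beta))$ and, by Lemma~\ref{lem:stays}, these overlap values do not change as the algorithm merges paths. Hence, as in Ukkonen's argument, it suffices to (i) precompute, for every string $t\in\barS$, the list of strings $t'\in\barS$ together with $\ov(t,t')$, grouped by overlap length, and (ii) process candidate arcs in order of decreasing weight, maintaining the current collection of disjoint paths with a union-find-like structure so that each arc is tested in $\Oh(1)$ amortized time for whether adding $\{e,e^R\}$ keeps $F$ a collection of disjoint paths.

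For step (i), I would build the Aho–Corasick automaton (equivalently, a combined suffix structure) of $\barS$. A standard fact (used by Ukkonen) is that for a factor-free set the nonzero overlaps can be read off from the failure links: the overlap $\ov(t,t')$ of length $\ell>0$ corresponds to a suffix of $t$ that is a prefix of $t'$, i.e. to the deepest node on the root-to-leaf path of $t$ (in the prefix trie of $\barS$) whose failure-chain hits the state representing the prefix of $t'$ of length $\ell$. One sweeps the automaton once, and for each pair $(t,\ell)$ with a nonzero overlap records, in a bucket indexed by $\ell$, the arcs of weight $\ell$. The total number of such arcs that actually need to be examined is $\Oh(n)$: for a fixed source $t$ one only ever needs the single longest overlap into each still-available head, and once a head is "used" (becomes internal to a path) it is never needed again; symmetrically for each target. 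This is exactly the counting argument of Ukkonen, and the only new bookkeeping is that arcs come in mirror pairs $e,e^R$ of equal weight, which is harmless: one simply processes $e$ and its mirror together, and since $\str(\alpha^R)=\str(\alpha)^R$ and $\ov(u,v)=\ov(v^R,u^R)$, the mirror of an available arc is automatically available.

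For step (ii), I maintain for the current $F$, for each vertex, whether it is currently a path-endpoint (a free head or a free tail) or internal, and a pointer from each endpoint to the opposite endpoint of its path (to detect the cycle case $u\in\{v,v^R\}$, equivalently $\beta$ being the tail of the path ending at $\alpha$, and also its mirror). When the top bucket is emptied of still-relevant arcs, move to the next lower weight. Each of the $|S|-1$ iterations performs one successful merge in $\Oh(1)$; each unsuccessful arc test (source no longer a tail, target no longer a head, or would-close-a-cycle) also costs $\Oh(1)$ and, by the counting argument above, the number of such tests is $\Oh(n)$. Finally, outputting $\str(\pi)$ for one maximal path is linear. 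Summing: preprocessing $\Oh(n)$ by Lemma~\ref{lem:mrff} and the claim of Ukkonen, automaton construction $\Oh(n)$ (using that $\Sigma$ is linearly sortable, so the automaton/suffix structure is built in $\Oh(n)$ deterministic time), overlap extraction $\Oh(n)$, the merging loop $\Oh(n)$, output $\Oh(n)$.

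The main obstacle — and the part that needs the most care — is the overlap-extraction step: proving that the $\Oh(n)$ bound on the number of inspected arcs survives the passage from $G_S$ to $\barG{S}$, i.e. that working with $\barS$ (double the strings, arcs in mirror pairs) does not blow up the count by more than a constant factor, and that the "each head/tail is needed at most once" invariant is genuinely preserved when a single merge simultaneously affects a path and its mirror. I expect this to go through cleanly because the mirror structure is an involution that pairs up objects without creating new ones, but it is the place where one must be explicit rather than just cite Ukkonen.
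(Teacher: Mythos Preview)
Your proposal is correct and takes essentially the same approach as the paper: preprocess via Lemma~\ref{lem:mrff}, build the Aho--Corasick automaton for $\barS$, scan candidate overlaps in decreasing length while maintaining path-endpoint status on $\barG{S}$, and amortize \`a la Ukkonen. The paper's presentation differs only in that it iterates over states $w\in\Pref(\barS)$ together with the lists $\PrefSet(w,\barS)$ and $\SufSet(w,\barS)$ rather than ``bucketing arcs'' (so arcs are never materialized, sidestepping the $\Theta(m^2)$ worry implicit in your phrasing), and it makes the amortization explicit by observing that for a fixed tail $\alpha$ at most two non-redundant heads $\beta$ can be invalid---namely $\alpha^R$ and the start of $\alpha$'s own path.
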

\begin{proof}
  By Lemma~\ref{lem:mrff}, we can make the input set reverse-factor-free in linear time.
  Let $S=\{s_1,\ldots,s_m\}$ be the set of remaining strings and $n$ the total length of strings in $S$.
  
  We actually implement the equivalent algorithm Greedy-R2.
  We cannot store the whole graph $\barG{S}$, since this would take too much space.
  Therefore we only store its vertex set, whereas we will be considering the edges of the graph in an indirect way.
  
  Denote by $\Pref(X)$ the set of all different prefixes of strings in $X$.
  Each element of this set can be represented as a state of the Aho-Corasick automaton constructed for $X$, thus
  using $\Oh(1)$ space per element.
  Further, given a string $w$, denote by $\PrefSet(w,X)$, $\SufSet(w,X)$ the sets (represented as lists of identifiers) of strings
  in $X$ having $w$ as a prefix, suffix, respectively.
  Ukkonen \cite{Uk90} applied the Aho-Corasick automaton for $X$ to show the following fact:

  \begin{claim}[\cite{Uk90}]
    $\PrefSet(w,X)$, $\SufSet(w,X)$ for all $w \in \Pref(X)$ can be computed in time
    linear in the total length of the strings in $X$.
  \end{claim}
  In the implementation we use sets of the form $\Pref(\barS)$, $\PrefSet(w,\barS)$ and $\SufSet(w,\barS)$.
  Our implementation actually requires $\PrefSet$ and $\SufSet$ sets to consist of vertices $\alpha\in V(\barG{S})$
  instead of strings $\str(\alpha)$. Thus, we replace every string identifier with one or two vertex identifiers
  if the string is not a palindrome or is a palindrome, respectively.

  Instead of simulating the main loop directly, we will consider all overlaps $w$ of $\str(\alpha)$ and $\str(\beta)$
  to find the maximum-weight arc $e=(\alpha,\beta)\in E(\barG{S})$.  
  Observe that at subsequent iterations of the loop, the weight $w(e)$ may only decrease.
  Hence, we iterate over all  $w \in \Pref(\barS)$ in decreasing length order
  and for each $w$ check if there exists an appropriate arc $e=(\alpha,\beta)$ such that $F\cup\{e,e^R\}$
  forms a collection of disjoint paths. More formally, we seek vertices $\alpha,\beta\in V(\barG{S})$ such that:
  \begin{enumerate}[(a)]
    \item a path formed by $F$ ends in $\alpha$ and a \emph{different} path formed by $F$ starts at $\beta$,
    \item $\beta \ne \alpha^R$, and
    \item $w$ is the overlap of $\str(\alpha)$ and $\str(\beta)$.
  \end{enumerate}
  Once we find such an arc $e$, we set $F:=F\cup\{e,e^R\}$. 

  Let us explain how this approach can be implemented efficiently.  
  To iterate through all $w \in \Pref(\barS)$ in decreasing length order we simply traverse all the states of the Aho-Corasick automaton
  in reverse-BFS order, breaking ties arbitrarily.
  To check the conditions (a)-(c), we could iterate through pairs of elements
  $\beta \in \PrefSet(w,\barS)$ and $\alpha \in \SufSet(w,\barS)$ and verify if they satisfy the conditions.
  However, to avoid repetitively scanning \emph{redundant} elements, we remove $\beta\in \PrefSet(w,\barS)$
  and $\alpha\in \SufSet(w,\barS)$ if no path starts in $\beta$ and no path ends in $\alpha$, respectively.
  For each vertex we will remember if a path starts or ends there, and, if so, what is the other endpoint of the path.
  Observe that for each $\alpha \in \SufSet(w,\barS)$ there are at most two non-redundant elements $\beta \in \PrefSet(w,\barS)$
  for which the arc $e=(\alpha,\beta)$ is not valid.
  Indeed, these might only be $\alpha^R$ and the starting vertex of the path ending at $\alpha$. 
  
  Hence, with amortized constant-time overhead, either an arc $e=(\alpha,\beta)$ satisfying (a)-(c) is found,
  or it can be verified that no such arc exists for this particular string $w$ and then
  we can continue iterating through strings in $\Pref(\barS)$.
  If an arc is found, we start the next search with the same string $w$, since
  there could still be arcs satisfying conditions (a)-(c) for the same string $w$.

  To conclude: in every step of the simulation, in amortized constant time we either find a pair of arcs to be introduced to $F$
  or discard the given candidate $w \in \Pref(\barS)$.
  The former situation takes place at most $m-1$ times and the latter happens at most $n$ times.
  The whole algorithm thus works in $\Oh(n)$ time.
 \end{proof}

%%%%%%%%%%%%%%%%%%%%%%%%%%%%%%%%%%%%%%%%%%%%%%%%%%%%%%%%%%%%%%%%%%%%%%%%%%%%%%%%%%%%%%%%%%%%%
\section{Compression Ratio}
%%%%%%%%%%%%%%%%%%%%%%%%%%%%%%%%%%%%%%%%%%%%%%%%%%%%%%%%%%%%%%%%%%%%%%%%%%%%%%%%%%%%%%%%%%%%%

In this section we prove that the compression ratio of the Greedy-R algorithm is always at least $\frac12$, and that this value is effectively achieved, so the bound is tight.

Let $S=\{s_1,\ldots,s_m\}$ be the input set of strings.
We assume that $S$ is already reverse-factor-free.
Let $\opt(S)$ be the length of a longest semi-Hamiltonian path in $G=\barG{S}$.
Let $\pgreedy(S)$ denote the length of the semi-Hamiltonian path produced by the Greedy-R algorithm for $S$.
We will show that $\pgreedy(S) \ge \frac12 \opt(S)$.

In the proof we use as a tool the following fact from \cite{TaUk88} (see Lemma~3.1 in \cite{TaUk88}):
\begin{lemma}\label{lem:4vert}
  If strings $x_1,x_2,x_3,x_4$ satisfy
  $$\max(\ov(x_1,x_4),\ov(x_2,x_3)) \le \ov(x_1,x_3),$$
  then
  $$\ov(x_1,x_4)+\ov(x_2,x_3) \le \ov(x_1,x_3)+\ov(x_2,x_4).$$
\end{lemma}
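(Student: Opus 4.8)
The plan is to translate the inequality into a statement about how the four relevant overlap strings are nested inside one another. Abbreviate $a=\ov(x_1,x_3)$, $b=\ov(x_1,x_4)$, $c=\ov(x_2,x_3)$ and $d=\ov(x_2,x_4)$, so that the hypothesis is $\max(b,c)\le a$ and the goal is $b+c\le a+d$. First I would dispose of the easy case: if $b+c\le a$, then since $d\ge 0$ the conclusion holds trivially, so from now on I may assume $b+c>a$.

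Next I would fix the ``reference'' overlap: let $w$ be the suffix of $x_1$ of length $a$, which by definition of $\ov$ is a prefix of $x_3$. Because $b\le a$, the suffix of $x_1$ of length $b$ is a suffix of $w$; call it $w'$, and note that $w'$ is a prefix of $x_4$. Because $c\le a$ and both $w$ and the length-$c$ suffix $z$ of $x_2$ are prefixes of $x_3$ with $|z|\le|w|$, the string $z$ is a prefix of $w$. Now $z$ is a prefix of $w$ and $w'$ is a suffix of $w$ with $|z|+|w'|=c+b>a=|w|$, so inside $w$ the blocks $z$ and $w'$ overlap in a segment of length exactly $b+c-a$: the suffix of $z$ of length $b+c-a$ equals the prefix of $w'$ of length $b+c-a$.

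Finally I would read off the conclusion. The suffix of $z$ of length $b+c-a$ is a suffix of $x_2$ (since $z$ is a suffix of $x_2$), and the prefix of $w'$ of length $b+c-a$ is a prefix of $x_4$ (since $w'$ is a prefix of $x_4$); as these two strings coincide, $x_2$ has a suffix of length $b+c-a$ that is a prefix of $x_4$, hence $d=\ov(x_2,x_4)\ge b+c-a$, i.e.\ $b+c\le a+d$. The only delicate point is the alignment bookkeeping inside $w$: one genuinely needs $b\le a$ to place $w'$ as a suffix of $w$ and $c\le a$ to place $z$ as a prefix of $w$, and then $b+c>a$ is precisely what forces these two blocks to share a common stretch of length $b+c-a$. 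Everything else is immediate from the definition of $\ov$, and there are no further edge cases once the degenerate situation $b+c\le a$ has been handled up front.
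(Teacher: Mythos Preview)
Your proof is correct. The paper does not actually prove this lemma; it merely quotes it as Lemma~3.1 from Tarhio and Ukkonen~\cite{TaUk88} and uses it as a black box. Your argument---nesting the length-$b$ suffix and the length-$c$ prefix inside the common overlap string $w$ of length $a$, and reading off a suffix--prefix match of length $b+c-a$ between $x_2$ and $x_4$---is precisely the standard combinatorial proof (and is essentially the argument given in~\cite{TaUk88}). The case split on $b+c\le a$ versus $b+c>a$ and the alignment bookkeeping are handled cleanly; there are no gaps.
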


We proceed with the following crucial lemma.

\begin{lemma}\label{lem:compression_ratio}
  Let $S$ be a set of strings and let $u,v \in \barS$ be two elements for which $\ov(u,v)$ is maximal
  ($u \notin \{v,v^R\}$).
  Set $\OV=\ov(u,v)$.
  Let $\opt(S)$ be the length of a longest semi-Hamiltonian path in $G$
  and let $\opt'(S)$ be the length of a longest semi-Hamiltonian path in $G$ that contains the arc $(u,v)$.
  Then:
  $$\opt'(S) \ge \opt(S)-\OV.$$
\end{lemma}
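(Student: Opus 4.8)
The plan is to take an optimal semi-Hamiltonian path $\pi$ in $G$ and surgically modify it so that it uses the arc $(u,v)$ while losing at most $\OV$ in total weight. The starting point is that $\pi$ either already contains both $u$ and $v$ among its vertices (possibly as $u^R$ or $v^R$), or it does not contain $u$, or it does not contain $v$. By the semi-Hamiltonian property, for each string $s\in S$ the path contains exactly one of the two vertices corresponding to $s$ and $s^R$, so it always contains \emph{some} vertex of the pair $\{u,u^R\}$ and \emph{some} vertex of $\{v,v^R\}$. Since reversing the whole path $\pi$ yields a semi-Hamiltonian path of the same weight, and since $\ov(u,v)=\ov(v^R,u^R)$, we may pass to $\pi$ or $\pi^R$; this lets us normalize so that $\pi$ contains the vertex $u$ (not $u^R$) or contains the vertex $v$ (not $v^R$). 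I would first handle the generic case where $\pi$ contains $u$ followed later by some vertex and, separately, contains $v$ preceded by some vertex, i.e.\ $\pi$ has the shape $\cdots a,\, u,\, b \cdots c,\, v,\, d \cdots$ for some vertices $a,b,c,d$ (any of which may be absent if $u$ or $v$ is an endpoint).

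The key step is a local rewiring. Write $\pi$ as a concatenation of three arc-disjoint segments by cutting at the two arcs $(u,b)$ and $(c,v)$: a prefix $\pi_1$ ending at $u$, a middle part from $b$ to $c$, and a suffix $\pi_2$ starting at $v$. Deleting the arcs $(u,b)$ and $(c,v)$ and inserting the arc $(u,v)$ reconnects $\pi_1$ directly to $\pi_2$, and we must splice the orphaned middle segment $b\cdots c$ back in somewhere; the natural choice is to attach it by reversing it, obtaining the segment $c^R\cdots b^R$, and reinserting it — but this only keeps things semi-Hamiltonian if we reattach it using vertices whose reversals are consistent, which forces us to instead reconnect via the arc $(c, b^R)$ or glue $c^R \cdots b^R$ onto one free end of the merged path $\pi_1 (u,v)\pi_2$. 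I would show that in all configurations one can produce a semi-Hamiltonian path containing $(u,v)$ whose total weight drops by at most $\big(\ov(u,b)+\ov(c,v)\big) - \ov(u,v) - (\text{weight of whatever new arc we add})$; the crux is bounding this loss by $\OV=\ov(u,v)$. Here is where Lemma~\ref{lem:4vert} enters: because $\OV$ is the \emph{global} maximum overlap over all pairs in $\barS$, we have $\ov(u,b)\le \OV$ and $\ov(c,v)\le \OV$ and $\ov(c,b^R)\le\OV$, so applying Lemma~\ref{lem:4vert} to an appropriate quadruple $(u,b,c,v)$ or $(u,b,c,b^R)$ shows that redistributing these overlaps costs at most one $\OV$. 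The cases where $u$ or $v$ is an endpoint of $\pi$, or where $u$ and $v$ are adjacent in $\pi$ already (in the wrong order, so $\pi$ uses $(v,u)$ or $\pi$ uses arcs $(u,v)$ directly — trivial), or where $u,u^R$ resp.\ $v,v^R$ forces the reversal normalization, are all degenerate instances of the same rewiring and lose no more than $\OV$.

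The main obstacle I anticipate is the bookkeeping of the semi-Hamiltonian constraint under reversal of the middle segment: once we reverse $b\cdots c$ to $c^R\cdots b^R$, every internal vertex $\gamma$ of that segment is replaced by $\gamma^R$, which is fine (the path still contains exactly one of $\{\gamma,\gamma^R\}$), but we must make sure the two cut vertices $b$ and $c$ don't collide with their own reversals or with $u,v,u^R,v^R$ in a way that destroys vertex-disjointness — this is exactly the kind of issue Lemma~\ref{lem:free} and the reverse-factor-free hypothesis are there to preclude, since $S$ reverse-factor-free means no string equals its reversal unless it is a palindrome, and palindromes are handled by noting $\gamma=\gamma^R$ then. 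A secondary subtlety is ensuring that when $u$ or $v$ happens to already be adjacent to the other in $\pi$, the "removed" arcs and "added" arc coincide and the bound becomes vacuous or trivially holds. I would organize the write-up as: (1) reduce to a canonical form via whole-path reversal; (2) state the rewiring operation and verify it yields a semi-Hamiltonian path through $(u,v)$; (3) bound the weight loss using $\OV$-maximality and Lemma~\ref{lem:4vert}; (4) dispatch the endpoint and adjacency corner cases.
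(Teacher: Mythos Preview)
Your overall strategy---take an optimal semi-Hamiltonian path $\pi$, normalize by possibly replacing $\pi$ with $\pi^R$, then rewire locally to force the arc $(u,v)$ while losing at most $\OV$---is exactly the paper's. But your case analysis collapses too early. After the whole-path reversal you can arrange that $\pi$ contains $u$ \emph{or} $v$, but not necessarily both, and even when both are present you cannot force $u$ to precede $v$: if $\pi$ visits $v$ before $u$, then $\pi^R$ visits $u^R$ before $v^R$, so neither orientation gives the configuration ``prefix ending at $u$, suffix starting at $v$'' that your rewiring assumes. Thus there are really four residual cases to handle: $(u,v)$ in that order, $(v,u)$ in that order, and the two mixed cases $(u,v^R)$ in either order. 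Your text treats only the first and dismisses the rest as ``degenerate instances of the same rewiring,'' which they are not.

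The case you are missing most seriously is $v$ before $u$ with both present. Here no two-arc swap suffices: one must delete three arcs $(a,v),(v,b),(u,d)$ and insert three arcs $(a,d),(t,b),(u,v)$, and bounding the loss by $\OV$ genuinely requires Lemma~\ref{lem:4vert} applied to the quadruple $x_1=u$, $x_2=a$, $x_3=v$, $x_4=d$ (so that $\ov(u,d)+\ov(a,v)\le\ov(u,v)+\ov(a,d)$). Your suggested quadruples $(u,b,c,v)$ and $(u,b,c,b^R)$ do not give this inequality. Conversely, in the case you \emph{do} treat ($u$ before $v$), Lemma~\ref{lem:4vert} is not needed at all, and reversing the middle segment is unnecessary: simply moving the (unreversed) middle $b\cdots c$ to the front of the path and adding the single arc $(c,s)$ already gives a loss of at most $\ov(u,b)\le\OV$. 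The segment reversals you propose are what is actually needed in the mixed cases (e.g.\ $u$ before $v^R$: reverse the block $b\cdots c\,v^R$ to $v\,c^R\cdots b^R$ and reattach), but those cases are absent from your write-up. Finally, your worry about vertex collisions under reversal is unfounded: in $\barG{S}$ the vertices $\alpha$ and $\alpha^R$ are distinct by construction even when $\str(\alpha)$ is a palindrome, so reversing a segment never creates a repeated vertex.
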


\begin{proof}
  We consider the path $\pi$ corresponding to $\opt(S)$ and show how it can be modified
  without losing its semi-Hamiltonicity so that the arc $(u,v)$ occurs in the path and the length of the path decreases by at most $\OV$.

  Obviously, if $\pi$ already contains the arc $(u,v)$, nothing is to be done.
  If both $u$ and $v$ occur in $\pi$, we perform transformations as in the proof of a similar fact
  from \cite{TaUk88} related to the ordinary SCS problem.
  If $u$ occurs in $\pi$ before $v$ then we select $\pi'$ as in Fig.~\ref{fig:case_a} and:
  $$\ov(\pi') \ge \ov(\pi)-\ov(u,b)-\ov(c,v)+\ov(u,v) \ge \ov(\pi)-\ov(u,b) \ge \ov(\pi)-\OV.$$
  Note that both nodes $b, c$ exist (they could be the same node, though).
  If any of the remaining nodes does not exist, it is simply skipped on the path.

  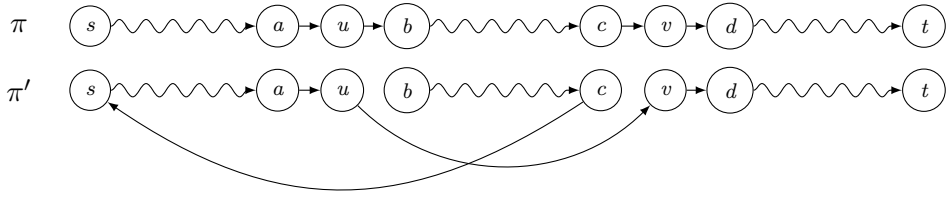
\begin{figure}[htpb]
  \begin{center}
      \tikzstyle{mynode} = [circle,draw]
      \begin{tikzpicture}[scale=0.85]
        \draw (-1,0) node {$\pi$};
        \foreach \x/\c in {0.1/s,3/a,4/u,5/b,8/c,9/v,10/d,13/t}
          \draw (\x,0) node[mynode] (\c) {\it\scriptsize \c};
        \foreach \x/\y in {s/a,b/c,d/t}
          \draw[decorate, decoration={snake},-latex] (\x) -- (\y);
        \foreach \x/\y in {a/u,u/b,c/v,v/d}
          \draw[-latex] (\x) -- (\y);

        \begin{scope}[yshift=-1cm]
        \draw (-1,0) node {$\pi'$};
        \foreach \x/\c in {0.1/s,3/a,4/u,5/b,8/c,9/v,10/d,13/t}
          \draw (\x,0) node[mynode] (\c) {\it\scriptsize \c};
        \foreach \x/\y in {s/a,b/c,d/t}
          \draw[decorate, decoration={snake},-latex] (\x) -- (\y);
        \foreach \x/\y in {a/u,v/d}
          \draw[-latex] (\x) -- (\y);
        \draw[-latex] (c) .. controls (5,-2) and (3,-2) .. (s); %
        \draw[-latex] (u) .. controls (5.5,-1.5) and (7.5,-1.5) .. (v); %
        \end{scope}
      \end{tikzpicture}
  \end{center}
  \vspace*{-0.4cm}
  \caption{\label{fig:case_a}Proof of Lemma~\ref{lem:compression_ratio}, first case: $u$ occurs in $\pi$ before $v$.}
\end{figure}

  If $v$ occurs before $u$, then by applying the inequality of Lemma~\ref{lem:4vert}
  (with $x_1=u$, $x_2=a$, $x_3=v$, $x_4=d$)
  we have 
  $$\ov(u,d)+\ov(a,v) \le \ov(u,v)+\ov(a,d).$$
  By this inequality, for the path $\pi'$ defined in Fig.~\ref{fig:case_b} we have:
  \begin{align*}
    \ov(\pi') &\ge \ov(\pi)-\ov(a,v)-\ov(u,d)+\ov(u,v)+\ov(a,d)-\ov(v,b)\\
              &\ge \ov(\pi)-\ov(v,b) \ge \ov(\pi)-\OV.
  \end{align*}
  As before, if any of the depicted nodes does not exist, we simply skip the corresponding part of the path.
  In particular, if any of the nodes $u,v$ is an endpoint of the path $\pi$, we do not need to use the aforementioned
  inequality to show that $\ov(\pi') \ge \ov(\pi)-\OV$.

  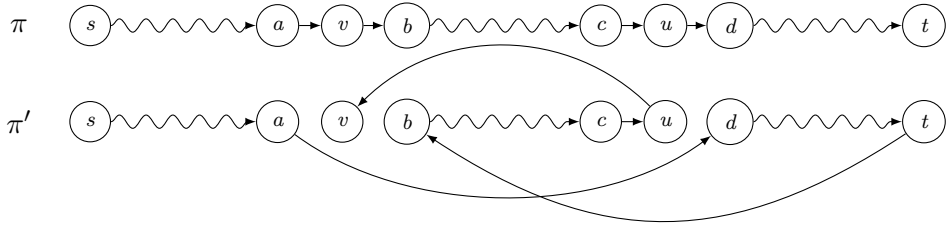
\begin{figure}[htpb]
  \begin{center}
      \tikzstyle{mynode} = [circle,draw]
      \begin{tikzpicture}[scale=0.85]
        \draw (-1,0) node {$\pi$};
        \foreach \x/\c in {0.1/s,3/a,4/v,5/b,8/c,9/u,10/d,13/t}
          \draw (\x,0) node[mynode] (\c) {\it\scriptsize \c};
        \foreach \x/\y in {s/a,b/c,d/t}
          \draw[decorate, decoration={snake},-latex] (\x) -- (\y);
        \foreach \x/\y in {a/v,v/b,c/u,u/d}
          \draw[-latex] (\x) -- (\y);

        \begin{scope}[yshift=-1.5cm]
        \draw (-1,0) node {$\pi'$};
        \foreach \x/\c in {0.1/s,3/a,4/v,5/b,8/c,9/u,10/d,13/t}
          \draw (\x,0) node[mynode] (\c) {\it\scriptsize \c};
        \foreach \x/\y in {s/a,b/c,d/t}
          \draw[decorate, decoration={snake},-latex] (\x) -- (\y);
        \foreach \x/\y in {c/u}
          \draw[-latex] (\x) -- (\y);
        \draw[-latex] (a) .. controls (5,-1.5) and (8,-1.5) .. (d); %
        \draw[-latex] (t) .. controls (10,-2) and (8,-2) .. (b); %
        \draw[-latex] (u) .. controls (7.5,1.5) and (5.5,1.5) .. (v); %
        \end{scope}
      \end{tikzpicture}
  \end{center}
  \vspace*{-0.4cm}
  \caption{\label{fig:case_b}Proof of Lemma~\ref{lem:compression_ratio}, second case: $u$ occurs in $\pi$ after $v$.}
\end{figure}

  Differently from the original SCS problem considered in \cite{TaUk88}, it
  might not be the case that $u$ and $v$ are in $\pi$.
  If none of them is, then $\pi$ contains both $u^R$ and $v^R$ and by reversing $\pi$ (that is, taking the path $\pi^R$)
  we obtain a semi-Hamiltonian path that contains both $u$ and $v$,
  which was the case considered before.
  Thus, we can assume that $u$ and $v^{R}$ occur in $\pi$ (the case of $u^{R}$ and $v$ is symmetric).
  Again, we have two cases, depending on which of the two nodes comes first in $\pi$.
  If $u$ occurs before $v^{R}$ in $\pi$, then we have (note that $\ov(c,v^{R})=\ov(v,c^{R})$):
  $$\ov(\pi') \ge \ov(\pi)-\ov(u,b)-\ov(v^{R},d)+\ov(u,v) \ge \ov(\pi)-\ov(u,b) \ge \ov(\pi)-\OV,$$
  see also Fig.~\ref{fig:case_c}.
 
Finally, if $v^{R}$ occurs before $u$, then (see Fig.~\ref{fig:case_d}):
  $$\ov(\pi') \ge \ov(\pi)-\ov(v^{R},b)-\ov(u,d)+\ov(u,v) \ge \ov(\pi)-\ov(v^{R},b) \ge \ov(\pi)-\OV.$$

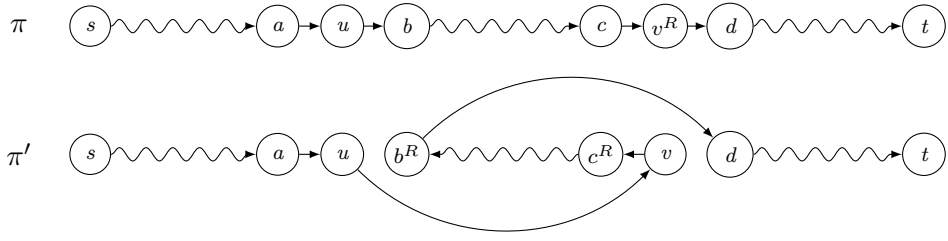
\begin{figure}
  \begin{center}
      \tikzstyle{mynode} = [circle,draw]
      \begin{tikzpicture}[scale=0.85]
        \draw (-1,0) node {$\pi$};
        \draw (9,0) node[mynode,inner sep=0pt,minimum size=0.58cm] (vr) {\scriptsize $v^{R}$};
        \foreach \x/\c in {0.1/s,3/a,4/u,5/b,8/c,10/d,13/t}
          \draw (\x,0) node[mynode] (\c) {\it\scriptsize \c};
        \foreach \x/\y in {s/a,b/c,d/t}
          \draw[decorate, decoration={snake},-latex] (\x) -- (\y);
        \foreach \x/\y in {a/u,u/b,c/vr,vr/d}
          \draw[-latex] (\x) -- (\y);

        \begin{scope}[yshift=-2cm]
        \draw (-1,0) node {$\pi'$};
        \draw (5,0) node[mynode,inner sep=0pt,minimum size=0.58cm] (br) {\scriptsize $b^{R}$};
        \draw (8,0) node[mynode,inner sep=0pt,minimum size=0.58cm] (cr) {\scriptsize $c^{R}$};
        \foreach \x/\c in {0.1/s,3/a,4/u,9/v,10/d,13/t}
          \draw (\x,0) node[mynode] (\c) {\it\scriptsize \c};
        \foreach \x/\y in {s/a,cr/br,d/t}
          \draw[decorate, decoration={snake},-latex] (\x) -- (\y);
        \foreach \x/\y in {a/u,v/cr}
          \draw[-latex] (\x) -- (\y);
        \draw[-latex] (u) .. controls (5.5,-1.5) and (7.5,-1.5) .. (v); %
        \draw[-latex] (br) .. controls (6.5,1.5) and (8.5,1.5) .. (d); %
        \end{scope}
      \end{tikzpicture}
  \end{center}
  \vspace*{-0.4cm}
  \caption{\label{fig:case_c}Proof of Lemma~\ref{lem:compression_ratio}, third case: $u$ occurs in $\pi$ before $v^R$.}
\end{figure}

\begin{figure}
\bigskip
  \begin{center}
      \tikzstyle{mynode} = [circle,draw]
      \begin{tikzpicture}[scale=0.85]
        \draw (-1,0) node {$\pi$};
        \draw (4,0) node[mynode,inner sep=0pt,minimum size=0.56cm] (vr) {\scriptsize $v^{R}$};
        \foreach \x/\c in {0.1/s,3/a,5/b,8/c,9/u,10/d,13/t}
          \draw (\x,0) node[mynode] (\c) {\it\scriptsize \c};
        \foreach \x/\y in {s/a,b/c,d/t}
          \draw[decorate, decoration={snake},-latex] (\x) -- (\y);
        \foreach \x/\y in {a/vr,vr/b,c/u,u/d}
          \draw[-latex] (\x) -- (\y);

        \begin{scope}[yshift=-2cm]
        \draw (-1,0) node {$\pi'$};
        \draw (0.1,0) node[mynode,inner sep=0pt,minimum size=0.56cm] (sr) {\scriptsize $s^{R}$};
        \draw (3,0) node[mynode,inner sep=0pt,minimum size=0.56cm] (ar) {\scriptsize $a^{R}$};
        \foreach \x/\c in {4/v,5/b,8/c,9/u,10/d,13/t}
          \draw (\x,0) node[mynode] (\c) {\it\scriptsize \c};
        \foreach \x/\y in {ar/sr,b/c,d/t}
          \draw[decorate, decoration={snake},-latex] (\x) -- (\y);
        \foreach \x/\y in {v/ar,c/u}
          \draw[-latex] (\x) -- (\y);
        \draw[-latex] (u) .. controls (7.5,1.5) and (5.5,1.5) .. (v); %
        \draw[-latex] (sr) .. controls (3,-2) and (7,-2) .. (d); %
        \end{scope}
      \end{tikzpicture}
  \end{center}
  \vspace*{-0.4cm}
  \caption{\label{fig:case_d}Proof of Lemma~\ref{lem:compression_ratio}, third case: $u$ occurs in $\pi$ after $v^R$.}
\end{figure}
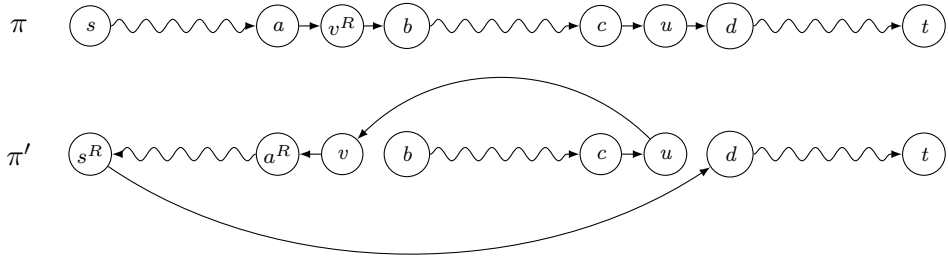

This completes the proof of the lemma.
 \end{proof}

To conclude the proof of the compression ratio of the Greedy-R algorithm we use an inductive argument.
Assume that $\pgreedy(S') \ge \frac12 \opt(S')$ for all $S'$ such that $|S'|<|S|$.
By Lemma~\ref{lem:compression_ratio}, for $S'=S \setminus \{u,v,u^R,v^R\} \cup \{u \otimes v\}$:
$$\opt(S)-\OV \le \opt'(S) \le \opt(S')+\OV,$$
so that $\opt(S')+2\OV \ge \opt(S)$.
Moreover, $\pgreedy(S)=\pgreedy(S')+\OV$ and, by the inductive hypothesis, $\pgreedy(S') \ge \frac12 \opt(S')$.
Consequently:
$$\pgreedy(S) = \pgreedy(S')+\OV \ge \frac12 (\opt(S')+2\OV) \ge \frac12 \opt(S).$$
We arrive at the following theorem.

\begin{theorem}
  Greedy-R has compression ratio $\frac12$.
\end{theorem}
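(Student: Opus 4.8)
Most of the work is already in place: the inductive argument above, resting on Lemma~\ref{lem:compression_ratio} and Lemma~\ref{lem:4vert}, establishes $\pgreedy(S)\ge\frac12\opt(S)$ for every reverse-factor-free~$S$ and for any way of breaking ties in the greedy step. The plan is therefore to (i) rephrase this inequality in terms of the compression ratio $(n-k_g)/(n-k_{min})$ --- where $k_g$ is the length of the string returned by Greedy-R and $k_{min}$ the length of an optimal common superstring with reversals --- and then (ii) exhibit an instance on which $\frac12$ is attained.

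For (i) I would first note that Greedy-R begins by replacing its input with its image under \MakeReverseFactorFree, which changes neither $k_{min}$ nor $k_g$, so one may assume $S$ is reverse-factor-free of total length $n$. As recalled in Section~2, an optimal solution of SCS-R for such an $S$ is $\str(\pi)$ for a maximum-weight semi-Hamiltonian path $\pi$ of $\barG{S}$, with $|\str(\pi)| = n-\ov(\pi)$; hence $k_{min} = n-\opt(S)$, and likewise $k_g = n-\pgreedy(S)$ since Greedy-R returns $\str(\cdot)$ of the semi-Hamiltonian path it constructs. As $\pgreedy(S)\le\opt(S)$ trivially, this yields
$$\frac{n-k_g}{n-k_{min}} \;=\; \frac{\pgreedy(S)}{\opt(S)} \;\ge\; \frac12,$$
the ratio being read as $1$ when $\opt(S)=0$ (in which case Greedy-R is optimal). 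For a general, not necessarily reverse-factor-free input, \MakeReverseFactorFree\ deletes some $\ell\ge 0$ symbols without affecting $k_{min}$ or $k_g$, and the elementary inequality $\frac{a}{b}\ge\frac{a-\ell}{b-\ell}$ (for $0<a\le b$ and $0\le\ell<b$) shows the compression ratio only increases, so it is still at least $\frac12$.

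For (ii) I would take, for each $h\ge 1$, the already reverse-factor-free instance $S=\{ab^h,\,b^hc,\,b^{h+1}\}$ over $\{a,b,c\}$ --- the classical SCS witness $\{ab^h,b^ha,b^{h+1}\}$ with a fresh letter substituted so that it survives \MakeReverseFactorFree. One checks that $ab^{h+1}c$ is a common superstring with reversals of length $h+3$ and that no string of length $h+2$ works: such a string would have to contain $b^{h+1}$, hence be one of $ab^{h+1},b^{h+1}a,cb^{h+1},b^{h+1}c,b^{h+2}$, and none of these contains both an occurrence from $\{ab^h,b^ha\}$ and one from $\{b^hc,cb^h\}$. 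Thus $k_{min}=h+3$ and $\opt(S)=3(h+1)-(h+3)=2h$. The largest overlap in $\barS$ equals $h$, realized e.g.\ by the admissible pair $(ab^h,b^hc)$; selecting it, Greedy-R is left with $\{ab^hc,\,b^{h+1}\}$, on which every admissible overlap is $0$, and therefore outputs a string of length $(h+2)+(h+1)=2h+3$. Hence $(n-k_g)/(n-k_{min}) = h/(2h) = \frac12$, so the bound is tight.

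The step I expect to be the real obstacle is the witness in (ii): reusing $\{ab^h,b^ha,b^{h+1}\}$ verbatim does not work because $b^ha=(ab^h)^R$, so \MakeReverseFactorFree\ collapses it to a two-element set on which Greedy-R is optimal; one must perturb it --- here by a third letter --- so as to destroy the reversal symmetry while preserving the overlap pattern that lures the greedy step into the wasteful merge. Part (i) is essentially bookkeeping, the only subtlety being that \MakeReverseFactorFree\ lowers $n$ without lowering $k_{min}$, which the monotonicity estimate above takes care of.
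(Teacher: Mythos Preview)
Your proposal is correct and follows essentially the same approach as the paper: the inequality $\pgreedy(S)\ge\frac12\opt(S)$ via the inductive argument and Lemma~\ref{lem:compression_ratio}, together with the tightness witness $\{ab^h,\,b^hc,\,b^{h+1}\}$. You are in fact more careful than the paper on two points it leaves implicit---the translation from $\pgreedy/\opt$ to the compression ratio $(n-k_g)/(n-k_{min})$, and the observation that the classical witness $\{ab^h,b^ha,b^{h+1}\}$ collapses under \MakeReverseFactorFree\ and so must be perturbed---but these are elaborations, not a different route.
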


It turns out that the bound on the compression ratio of the Greedy-R algorithm is tight.
It suffices to consider the set of strings $\{ab^h,b^hc,b^{h+1}\}$
(this is actually the same example as from the analysis of the Greedy algorithm \cite{BlJiLiTrYa94}).
The output of Greedy-R can be the string $ab^hcb^{h+1}$ with total overlap $h$,
whereas an optimal solution to SCS-R is $ab^{h+1}c$ of total overlap $2h$.

%%%%%%%%%%%%%%%%%%%%%%%%%%%%%%%%%%%%%%%%%%%%%%%%%%%%%%%%%%%%%%%%%%%%%%%%%%%%%%%%%%%%%%%%%%%%%
\section{NP-completeness of the SCS-R Problem}
%%%%%%%%%%%%%%%%%%%%%%%%%%%%%%%%%%%%%%%%%%%%%%%%%%%%%%%%%%%%%%%%%%%%%%%%%%%%%%%%%%%%%%%%%%%%%

  Let $\Gamma=\Sigma\cup\{\$,\#\}$, where $ \$,\# \not \in \Sigma$.
  Let $h$ be the morphism from $\Sigma^*$ to $\Gamma^*$  defined by $h(c)=\$\#\,c$, for every $c \in \Sigma$.
  Given $k\geq 1$, we also define the morphism $g_k(c)=c^k$, for every $c \in \Sigma$. 

  \begin{observation}\label{obs:small_ov}
    For every nonempty strings $u,v$, the strings $h(u)$ and $h(v)^{R}$ have an overlap of length at most one.
  \end{observation}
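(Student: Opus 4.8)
The plan is to prove this by a short structural argument, by contradiction. Write $u=c_1\cdots c_k$, so that $h(u)=\$\#c_1\$\#c_2\cdots\$\#c_k$; in particular every occurrence of $\#$ inside $h(u)$ is immediately preceded by $\$$, and hence \emph{no letter of $\Sigma$ is ever immediately followed by $\#$} in $h(u)$. Dually, if $v=d_1\cdots d_\ell$ then $h(v)=\$\#d_1\cdots\$\#d_\ell$, so $h(v)^{R}=d_\ell\,\#\,\$\,d_{\ell-1}\,\#\,\$\cdots d_1\,\#\,\$$; thus $h(v)^{R}$ begins with the letter $d_\ell\in\Sigma$ immediately followed by $\#$.

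I would then suppose for contradiction that $\ov(h(u),h(v)^{R})=t\ge 2$, and let $w=w_1\cdots w_t$ be the string of length $t$ that is simultaneously a suffix of $h(u)$ and a prefix of $h(v)^{R}$ (since $u,v$ are nonempty, $|h(u)|=3|u|\ge 3$ and $|h(v)^{R}|=3|v|\ge 3$, so this makes sense for $t=2$). From the description of the first two symbols of $h(v)^{R}$ and $t\ge 2$ we get $w_1=d_\ell\in\Sigma$ and $w_2=\#$. On the other hand, since $w$ is a suffix of $h(u)$ with $|w|\ge 2$, the symbols $w_1$ and $w_2$ occur consecutively inside $h(u)$, i.e.\ $h(u)$ contains a factor of length $2$ consisting of a letter of $\Sigma$ immediately followed by $\#$. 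This contradicts the observation that $\#$ is always preceded by $\$$ in $h(u)$. Hence $t\le 1$, which is exactly the claim.

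I do not expect any real obstacle here: the statement is essentially a one-line consequence of the shape $h(c)=\$\#c$, and the only facts to verify are the (immediate) descriptions of the first two symbols of $h(v)^{R}$ and of the possible left neighbours of $\#$ in $h(u)$. The symmetric roles of $\$$ and $\#$ even make the argument robust, since one could equally derive the contradiction by looking at the suffix side of $h(v)^{R}$ (where $\$$ is always preceded by $\#$) instead of its prefix side.
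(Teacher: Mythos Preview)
Your argument is correct. The paper states this as an Observation without proof, leaving it as an immediate consequence of the shape $h(c)=\$\#c$; your contradiction via the first two symbols of $h(v)^{R}$ is exactly the kind of one-line check the authors are tacitly invoking, so there is nothing substantive to compare.

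One small remark: the phrase ``have an overlap'' is arguably meant to cover both $\ov(h(u),h(v)^{R})$ and $\ov(h(v)^{R},h(u))$, and the paper's later use of the Observation (bounding $|u'|\le |u|+m-1$ in the proof of Proposition~\ref{NP}) does rely on both directions, since the mixed arcs in $\pi$ can go either way. You explicitly treat only the first direction, but the second follows by the identical reasoning with the roles of $\$$ and $\#$ swapped --- precisely the symmetry you already point out in your closing sentence --- so this is at most a matter of stating it rather than a gap.
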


  \begin{observation}\label{obs:multiple}
    For every nonempty strings $u,v$, one has
    $$|h(g_k(u))|=3k|u| \quad\mbox{and }\quad \ov(h(g_k(u)),h(g_k(v))) = 3k\cdot\ov(u,v).$$
  \end{observation}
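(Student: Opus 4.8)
The plan is to handle the two assertions separately. The length identity is immediate, since $g_k$ multiplies lengths by $k$ and $h$ multiplies lengths by $3$, giving $|h(g_k(u))|=3\,|g_k(u)|=3k\,|u|$. For the overlap identity I would prove two ``scaling'' facts in isolation and then compose them: (i) $\ov(h(x),h(y))=3\,\ov(x,y)$ for all nonempty $x,y\in\Sigma^*$, and (ii) $\ov(g_k(u),g_k(v))=k\,\ov(u,v)$ for all nonempty $u,v\in\Sigma^*$. Instantiating (i) with $x=g_k(u)$ and $y=g_k(v)$ and then applying (ii) yields $\ov(h(g_k(u)),h(g_k(v)))=3\,\ov(g_k(u),g_k(v))=3k\,\ov(u,v)$.

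For (i) the inequality ``$\ge$'' is trivial: applying $h$ to the common suffix/prefix of length $\ov(x,y)$ produces a common suffix/prefix of $h(x)$ and $h(y)$ of length $3\,\ov(x,y)$. For ``$\le$'' I would exploit that $\$$ and $\#$ are distinct symbols outside $\Sigma$: in any word $h(w)$ the symbol in position $i$ is $\$$ precisely when $i\equiv 1\pmod 3$ and $\#$ precisely when $i\equiv 2\pmod 3$. Hence a nonempty suffix of $h(x)$ equal to a prefix of $h(y)$ must begin with $\$$, which forces its length to be a multiple of $3$; and a suffix of $h(x)$ of length $3t$ equals $h$ applied to the length-$t$ suffix of $x$ (and symmetrically for the length-$3t$ prefix of $h(y)$). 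Since $h$ is injective, the length-$t$ suffix of $x$ then coincides with the length-$t$ prefix of $y$, so $t\le\ov(x,y)$.

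For (ii) ``$\ge$'' is again trivial. For ``$\le$'', let $w$ be the overlap word and $\ell=|w|$; I would first pass to $u'$, the shortest suffix of $u$ with $w$ a suffix of $g_k(u')$, and $v'$, the shortest prefix of $v$ with $w$ a prefix of $g_k(v')$. Deleting the first (resp.\ last) letter of $u'$ (resp.\ $v'$) deletes exactly $k$ letters from $g_k(u')$ (resp.\ $g_k(v')$), so minimality gives $0\le |g_k(u')|-\ell<k$ and $0\le |g_k(v')|-\ell<k$; consequently $|u'|=|v'|$ (both equal $\lceil\ell/k\rceil=:r$). Writing $g_k(u')=a_1^k\cdots a_r^k$, $g_k(v')=b_1^k\cdots b_r^k$, and $s=kr-\ell\in\{0,\dots,k-1\}$, the word $w$ is simultaneously $a_1^{k-s}a_2^k\cdots a_r^k$ and $b_1^k\cdots b_{r-1}^k b_r^{k-s}$. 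If $s=0$ then $g_k(u')=g_k(v')$, hence $u'=v'$ by injectivity of $g_k$; if $s\ge 1$, comparing the two forms of $w$ block by block from the left propagates the equalities $a_1=b_1$, $a_2=b_1=b_2$, $a_3=b_2=b_3,\dots$, which forces all the $a_i$'s and $b_i$'s to be a single common letter, so again $u'=v'$. Either way the length-$r$ suffix of $u$ equals the length-$r$ prefix of $v$, whence $\ov(u,v)\ge r\ge\ell/k$, i.e.\ $\ell\le k\,\ov(u,v)$.

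Almost all of this is bookkeeping with positions modulo $3$ and modulo $k$. The one step I expect to be the real obstacle is the final part of (ii): showing that a factor which is ``almost all'' of both $g_k(u')$ and $g_k(v')$ but possibly shifted by fewer than $k$ positions cannot occur unless $u'=v'$. I would handle it exactly by the left-to-right propagation of letter equalities sketched above, together with a separate (easy) treatment of the degenerate case $r\le 1$ and of the last block when $r\ge 2$.
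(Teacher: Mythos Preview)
The paper states this as an \emph{Observation} and provides no proof at all, so there is nothing to compare against; your argument is a correct and complete justification. Both scaling facts (i) and (ii) are established cleanly: the modular-position argument for $h$ is the natural one, and your block-by-block propagation for $g_k$ in the shifted case $1\le s\le k-1$ indeed forces all letters of $u'$ and $v'$ to coincide (the overlaps of consecutive $a$- and $b$-blocks have sizes $s$ and $k-s$, both positive, so every adjacency yields an equality). One could shorten (ii) slightly by observing that $a_1^{k-s}a_2^k\cdots a_r^k=b_1^k\cdots b_{r-1}^k b_r^{k-s}$ with $s\ge 1$ exhibits $w$ as having periods $k$ and $k$ with an offset of $s<k$, hence period $\gcd(k,s)\mid k$, which immediately makes $w$ a power of a single letter; but your explicit propagation is perfectly fine.
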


  In the decision version of the SCS-R problem we need to check if a given set of strings
  admits a common superstring with reversals of length at most $\ell$, where $\ell$ is
  an additional input parameter.

  \begin{proposition}\label{NP}
    The decision version of the SCS-R problem is NP-complete.
  \end{proposition}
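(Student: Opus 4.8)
The plan is to prove membership in $\mathsf{NP}$ and then reduce from the classical SCS problem. Membership is routine: given an instance $(T,\ell')$ of SCS-R, a verifier guesses a string $u$ of length at most the total length of $T$ (such a bound is valid since the concatenation of all strings of $T$ is already a common superstring with reversals) and checks in polynomial time that $|u|\le\ell'$ and that $u$ contains $t$ or $t^R$ as a factor for every $t\in T$. For hardness I would reduce from SCS. Deleting from a set every string that is a proper factor of another one, and deleting duplicates, is a polynomial-time transformation that does not change the set of common superstrings; hence SCS remains NP-complete when restricted to factor-free sets of pairwise distinct, nonempty strings. So take such an instance $(S,\ell)$ with $S=\{s_1,\ldots,s_m\}$, $n=\sum_i|s_i|$, and $\ell$ a non-negative integer. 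Put $k=n$ and output the SCS-R instance $(T,\ell')$ where $T=\{\,h(g_k(s_1)),\ldots,h(g_k(s_m))\,\}$ and $\ell'=3k\ell$. By Observation~\ref{obs:multiple} the total length of $T$ is $3kn=3n^2$, so the reduction is polynomial.

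The first technical step is to check that $T$ is reverse-factor-free, so that the semi-Hamiltonian-path characterization of SCS-R recalled in Section~2 applies to it. The string $h(g_k(s_j))$ has the rigid shape in which every occurrence of $\$$ sits at a position $\equiv 1\pmod 3$; consequently, if $h(g_k(s_i))$ occurred in $h(g_k(s_j))$ the occurrence would have to be block-aligned, which would make $s_i^k$ a factor of $s_j^k$. A standard periodicity argument (Fine and Wilf), which goes through because $k=n$ is large enough, then forces $s_i$ to be a factor of $s_j$, contradicting factor-freeness. Moreover $h(g_k(s_i))$ begins with the factor $\$\#$, which never occurs in $h(g_k(s_j))^R$ (there every $\$$ is immediately followed by a symbol of $\Sigma$), so $h(g_k(s_i))$ cannot be a factor of $h(g_k(s_j))^R$ either; this is the structural phenomenon underlying Observation~\ref{obs:small_ov}.

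The core of the reduction is the relation between overlaps. An ordering of $S$ with total overlap $W_0$ yields, by Observation~\ref{obs:multiple}, the corresponding reversal-free ordering of $T$ with total overlap exactly $3kW_0$; optimising, $\barG{T}$ admits a semi-Hamiltonian path of weight $3kW$, where $W$ is the maximum weight of a Hamiltonian path in $G_S$. Conversely, take any semi-Hamiltonian path in $\barG{T}$ and split it into maximal runs of vertices that are all ``forward'' or all ``reversed''. Inside a run every arc weight equals $3k\cdot\ov(\cdot,\cdot)$ (for reversed runs one uses $\ov(x^R,y^R)=\ov(y,x)$ together with Observation~\ref{obs:multiple}), so each run contributes $3k$ times the weight of some Hamiltonian path on its vertex set inside $G_S$; since the runs partition $S$ and $G_S$ is complete, these contributions sum to at most $3kW$. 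Each of the at most $m-1$ arcs joining two consecutive runs has weight at most $1$ by Observation~\ref{obs:small_ov} (and the identity $\ov(x,y)=\ov(y^R,x^R)$ for the ``reversed-then-forward'' case). Hence every semi-Hamiltonian path in $\barG{T}$ has weight at most $3kW+(m-1)$. Using $|\str(\pi)|=3kn-\ov(\pi)$ and the integrality of overlaps: if $S$ has a common superstring of length $\le\ell$ then $W\ge n-\ell$, so $T$ has a common superstring with reversals of length $\le 3kn-3kW\le 3k\ell=\ell'$; if $S$ has no common superstring of length $\le\ell$ then $W\le n-\ell-1$, so every common superstring with reversals of $T$ has length $\ge 3kn-3kW-(m-1)\ge 3k\ell+(3k-m+1)>\ell'$, because $3k=3n\ge 3m>m-1$. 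Thus $(S,\ell)$ is a yes-instance of SCS iff $(T,\ell')$ is a yes-instance of SCS-R, which together with membership in $\mathsf{NP}$ proves the proposition.

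The main obstacle is precisely the presence of reversals in the last step: a priori an optimal SCS-R solution for $T$ could freely interleave forward and reversed copies, and one must argue this is never profitable. Observation~\ref{obs:small_ov} caps each ``switch'' between the two regimes at one unit of overlap, but a path may have up to $m-1$ switches, so the crude bound loses an additive $O(m)$. Making this slack harmless --- small enough that it cannot straddle the unit gap in $W$ separating yes- and no-instances of SCS --- is exactly why every alphabet letter is inflated by $g_k$ with $k=n$; the same inflation is also what lets the Fine--Wilf argument go through when proving that $T$ is reverse-factor-free.
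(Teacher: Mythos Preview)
Your proof is correct and follows essentially the same route as the paper: encode each $s_i$ as $h(g_k(s_i))$, observe that the forward direction is immediate via Observation~\ref{obs:multiple}, and for the backward direction decompose a semi-Hamiltonian path into maximal forward/reversed runs, note that within-run overlaps are multiples of $3k$ while the at most $m-1$ between-run arcs have weight $\le 1$ by Observation~\ref{obs:small_ov}, so the $g_k$ inflation swallows the slack. The paper uses $k=m$ rather than $k=n$ and phrases the backward step as an explicit rearrangement of the path (forward nodes first, then reversed nodes un-reversed in reverse order), but this is the same computation you carry out in terms of weights.

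Two small remarks. First, you write that block alignment of $h(g_k(s_i))$ inside $h(g_k(s_j))$ would make ``$s_i^k$ a factor of $s_j^k$''; recall that $g_k$ repeats each \emph{letter} $k$ times, so what you actually get is that $g_k(s_i)$ is a factor of $g_k(s_j)$, and a direct block argument (no Fine--Wilf needed, and no lower bound on $k$ required for this step) then yields that $s_i$ is a factor of $s_j$. Second, you are in fact more careful than the paper here: the paper tacitly uses the path/overlap-graph correspondence for $T$ without checking reverse-factor-freeness, whereas you verify it explicitly.
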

  \begin{proof}
    Let $s_1,\ldots,s_m$ be an instance of the SCS problem.
    Set  $y_i=h(g_m(s_i))$, for $i=1,\ldots,m$.
    We have the following claim.

    \begin{claim}
      There exists a common superstring of $s_1,\ldots,s_m$ of length at most $\ell$
      if and only if
      there exists a common superstring with reversal of $y_1,\ldots,y_m$ of length at most $3m\ell$.
    \end{claim}
    \begin{proof}
      $(\Rightarrow)$
      If $u$ is a common superstring of $s_1,\ldots,s_m$,
      then $h(g_m(u))$ is a common superstring of $y_1,\ldots,y_m$.
      Hence, $h(g_m(u))$ is also a common superstring with reversals of $y_1,\ldots,y_m$.
      If $|u|=\ell$, then $|h(g_m(u))|=3m\ell$.

      $(\Leftarrow)$
      Let $u$ be a common superstring with reversals of $y_1,\ldots,y_m$.
      We will show that, thanks to the special form of the strings, there exists
      a common superstring \emph{without} reversals of $y_1,\ldots,y_m$ of length not much greater than $|u|$.

      Let $\pi=z_{i_1},\ldots,z_{i_m}$ be the sequence of nodes on the path in the overlap graph $G$
      that corresponds to $u$;
      here $\{i_1,\ldots,i_m\}=\{1,\ldots,m\}$ and each $z_i$ is either $y_i$ or $y^R_i$. Let us construct a new sequence of nodes $\pi'$, that first contains all nodes from $\pi$
      of the form $y_i$ in the same order as in $\pi$, and then all nodes from $\pi$
      of the form $y_i^R$, but given in the reverse order and taken without reversal.
      Let $u'$ be the common superstring corresponding to $\pi'$.
      By Observation~\ref{obs:small_ov}, $|u'| \le |u|+m-1$.
      Note that $u'$ is an ordinary common superstring of $y_1,\ldots,y_m$.

      By Observation~\ref{obs:multiple}, $|u'|$ is a multiple of $3m$ and $u'$
      corresponds to a common superstring $v$ of $s_1,\ldots,s_m$
      of length $|u'|/(3m)$.
      If $|u|\le 3m\ell$, then
      $$|v|=\frac{|u'|}{3m} \le \left\lfloor\frac{|u|+m-1}{3m}\right\rfloor \le \ell.$$
    
    \vspace{-0.5cm}
    \end{proof}
    
    The claim provides a reduction of the decision version of the SCS problem to the decision version of the SCS-R problem.
    This shows that the latter is NP-hard, hence NP-complete, as it is obviously in NP.
 \end{proof}

%%%%%%%%%%%%%%%%%%%%%%%%%%%%%%%%%%%%%%%%%%%%%%%%%%%%%%%%%%%%%%%%%%%%%%%%%%%%%%%%%%%%%%%%%%%%%%
\section{Acknowledgments}
%%%%%%%%%%%%%%%%%%%%%%%%%%%%%%%%%%%%%%%%%%%%%%%%%%%%%%%%%%%%%%%%%%%%%%%%%%%%%%%%%%%%%%%%%%%%%%

The authors thank anonymous referees for a number of helpful comments and remarks.

This work started during a visit of Gabriele Fici to the University of Warsaw funded by the Warsaw Center of Mathematics and Computer Science. 

Gabriele Fici is supported by the PRIN 2010/2011 project ``Automi e Linguaggi Formali: Aspetti Matematici e Applicativi'' of the Italian Ministry of Education (MIUR).
Tomasz Kociumaka is supported by Polish budget funds for science in 2013-2017 as a research project under the `Diamond Grant' program.
Jakub Radoszewski, Wojciech Rytter and Tomasz Wale\'n are supported by the Polish National Science Center, grant no 2014/13/B/ST6/00770.

%%%%%%%%%%%%%%%%%%%%%%%%%%%%%%%%%%%%%%%%%%%%%%%%%%%%%%%%%%%%%%%%%%%%%%%%%%%%%%%%%%%%%%%%%%%%%

\bibliographystyle{plain}

\end{document}